\newcommand{\RR}{\mathbb{R}} 
\def\one{\mathbb 1}
\def\tilde{\widetilde}
\def\b#1{\boldsymbol{#1}}
\def\Diag{\mathrm{Diag}}
\newtheorem{theorem}{Theorem}[section]
\title{Node and Edge Nonlinear Eigenvector Centrality for Hypergraphs}
\author{%
  \textbf{Francesco Tudisco}%\thanks{Use footnote for providing further information
  %  about author (webpage, alternative address)---\emph{not} for acknowledging
   % funding agencies.} 
   \\
  School of Mathematics\\
  Gran Sasso Science Institute\\
  67100, L'Aquila (Italy) \\
  \texttt{francesco.tudisco@gssi.it} 
  % examples of more authors
   \And
   \textbf{Desmond J. Higham} \\
   School of Mathematics\\
   University of Edinburgh\\
   EH93FD, Edinburgh (UK)\\
   \texttt{d.j.higham@ed.ac.uk} 
}
\begin{document}

\maketitle

\begin{abstract}
Network 
scientists have shown that there is great value in studying pairwise interactions between components in a system. From a linear algebra point of view, 
this involves defining and evaluating functions of the associated adjacency matrix.
Recent work indicates that 
there are further benefits from 
accounting 
directly for higher order interactions, notably through a hypergraph representation 
where an edge may involve multiple nodes.
Building on these ideas, we motivate, define and analyze a class of 
spectral 
centrality measures for
identifying important nodes and hyperedges in hypergraphs, generalizing existing
network science concepts.   
By exploiting the latest developments in nonlinear Perron-Frobenius theory, 
we show how the resulting constrained nonlinear eigenvalue problems have unique
solutions that can be computed efficiently via a nonlinear
power method iteration.
We illustrate the 
measures on realistic data sets. 
\end{abstract}

\section{Introduction}\label{sec:mot}

The study of pairwise interactions has 
led to a  vast range of useful concepts and tools in network science 
\cite{EstradaKnight2015book,Newman2010book}.
Several recent studies have developed extensions that account for higher order interactions \cite{battiston2020beyond,estrada2016hypergraph,torres2020why}. 
Of course, the appropriate higher order representation 
is dependent on the research problem being addressed.
For example,
as discussed in 
\cite{torres2020why},
 in studying coauthorship data set one could pose three distinct questions:

\begin{enumerate}
\item[1.] Have two given authors ever contributed simultaneously to a multi-authored paper? A simple undirected graph records such pairwise interactions.

\item[2.] Has a given set of authors 
ever contributed simultaneously to a multi-authored paper?
Because any subset of these authors must also have 
contributed simultaneously to a multi-authored paper, we may use 
a simplicial complex to record these interactions.
This structure incorporates downward closure: any subset of nodes within a simplex
also forms a simplex.

\item[3.] Has a given set of authors formed the complete 
coauthorship list on a paper?
In this case, a hypergraph is appropriate, with the set of authors forming a 
hyperedge.
Any proper subset of those authors will not appear 
as a hyperedge unless they form the complete coauthorship list
on some other paper.
\end{enumerate}

In this work we are concerned with extensions of so-called eigenvector centrality measures to the higher-order setting of hypergraphs. 
 Eigenvector centrality for graphs  has been widely used to assign levels of importance to  individual nodes. It assigns scores to nodes in terms of the Perron eigenvector of the adjacency matrix of the graph
 \cite{bonacich1987eig,EstradaKnight2015book}. A standard approach when dealing with hypergraphs is to use graph-based algorithms on the clique-expanded graph of the hypergraph \cite{agarwal2006higher}. In this case, we can assign centrality scores to the nodes of the hypergraph by looking at the centrality of the nodes in its clique-expanded graph.
Another relatively well-established idea to represent and work with hypergraphs  relies on the use of higher-order tensors \cite{arrigo2020framework,benson2019three}. This approach requires a uniform hypergraph,
and defines 
centrality scores in terms of the Perron eigenvector of the adjacency tensor. 

We note that the clique-expansion matrix approach is a form of flattening which essentially corresponds to an additive model: 
for a given node, the importances of its neighbors in a hyperedge are summed into a linear (possibly weighted) combination. Instead, the  tensor-eigenvector approach  is a multiplicative model: the importances of the neighbors in each hyperedge are multiplied. While the two models coincide on standard graphs (as each hyperedge involves exactly two nodes, and hence there is at most one neighbor),  these two models are intrinsically different on uniform hypergraphs with larger hyperedges. See the \textit{Results and Discussion} section for more details.

In this work we define a general eigenvector model for node and edge centralities on hypergraphs with arbitrary hyperedge size based on the hypergraph incidence matrix and the choice of four nonlinear functions. Working in terms of the  incidence matrix
provides a general yet simple model which, for example,  immediately transfers to the case of   simplicial complexes, where all subsets of each hyperedge happen to be present. 
The choice of the nonlinear functions allows us to specify the  way node importances are combined within the hyperedges (and, vice-versa, the way hyperedge importances influence node scores). We  show that both the clique-expansion and, for uniform hypergraphs, the tensor-based eigenvector centralities are particular cases of the proposed model obtained for specific choices of the nonlinearities.  Thus, our approach allows us to generate a whole new family of centrality models and to extend popular tensor-eigenvector centrality models to general hypergraphs in a natural way. More precisely, the main contributions of this work as follows.

We formulate node and edge hypergraph centrality as a general constrained nonlinear eigenvalue problem \eqref{eq:NEP} based on the hypergraph incidence operator. We provide existence and uniqueness theory for the eigenvalue equation in Theorem~\ref{thm:existence_uniqueness}  and we propose a Nonlinear Power Method (NPM) to compute its solution.  In the special case of a 2-uniform hypergraph (a graph) this leads to a Hypertext Induced Topic Search (HITS) type iteration for networks that simultaneously    assigns centrality to nodes and edges, whereas, for general  hypergraphs, the NPM allows us to compute centrality scores for  nodes and general hyperedges.  The convergence theory for the NPM  is provided in Theorem~\ref{thm:convergence}. Finally, in the \textit{Experiments} section \S\ref{sec:comp}, we provide several computational experiments on synthetic and real-world data to illustrate the behavior of different  node and edge centrality models obtained as particular cases of the general constrained nonlinear eigenvalue equation \eqref{eq:NEP}.   

Our main contributions are presented in the  \textit{Results and Discussion} section \S\ref{sec:results} while all proofs 
appear in the \textit{Methods} section \S\ref{sec:methods}.

\section{Results and Discussion}\label{sec:results}

\subsection{Notation and Motivation}\label{sec:not}

We begin by 
 considering an unweighted, undirected, graph 
$G=(V,E)$ with node set $V=\{1,\dots,n\}$, edge set $E = \{e_1,\dots,e_m\}$
and binary adjacency matrix $A  \in \RR^{n \times n}$.
In this context, eigenvector 
centrality was popularized in the social network science
community
\cite{bonacich1987eig}, although the idea can be traced back to the 19th Century
\cite{schaf2019landau}. This centrality model assigns a measure of importance, $x_i > 0$, to node $i$
in such a way that the importance of
node $i$ is linearly proportional to the sum of the importances of its neighbours. 
This relationship may be written
\begin{equation}\label{eq:eig-centrality-A}
  A\b x = \lambda \b x, \quad \b x>0, \quad \text{for~some~} \lambda >0.
\end{equation}
 Thanks to the Perron--Frobenius theorem, this matrix eigenvector problem admits a unique solution $\b x^*$ if $A$ is irreducible (that is, the graph is connected)
\cite{EstradaKnight2015book}.
 In this case, $\b x^*$ can be computed to arbitrary precision via the 
 power method if $A$ is primitive (that is, the graph is aperiodic), see, 
 for example, \cite{tudisco2015complex}.

Over the years,  a large amount of work has been devoted to the definition of  centrality models able to capture different network properties and thus provide an importance score to the nodes of a graph. However, much less work has focused on models and methods for quantifying centrality of edges. 
In addition to being of interest in its own right, quantifying edge importance has natural applications in a number of important tasks, including link detection, edge prediction and matrix completion \cite{gleich2016seeded,kim2011link,dhillon2012link}.

An eigenvector centrality for edges can be developed by considering the line graph and its adjacency matrix $A^{(e)}$ \cite{brohl2019edge}. In this setting, $A^{(e)} \in \RR^{m \times m}$ has  
$A^{(e)}_{e_1,e_2}\neq 0$ if and only if $e_1\in E$ and $e_2\in E$ share at least one node. The centrality of the edges can thus be defined via the Perron eigenvector $A^{(e)}\b y = \lambda \b y$, as in \eqref{eq:eig-centrality-A}.

Another somewhat natural model for edge centrality can be induced by a given node centrality $\b x$: assign to each edge $e$ the score $y_e = \sum_{i\in e}x_i$ obtained by looking at the nodes the edge connects.
This is what is typically done when computing edge scores for, e.g., link prediction \cite{kim2011link,cipolla2021nonlocal,liben2007link} or network robustness optimization \cite{chan2014make,arrigo2016updating}. However, a mutually reinforcing centrality score for both nodes and edges can be designed by requiring the edges to inherit importance from the nodes they connect and, vice-versa, the nodes to inherit importance from the edges they participate in.
We describe this idea in detail in the next section. 
Since the resulting model extends essentially unchanged to the higher-order setting where edges contain an arbitrary number of nodes, we will present this idea in the framework of a hypergraph.

\subsection{Hypergraphs}% \subsection{Hypergraph Centrality}
\label{sec:hyp}
In the remainder of this work 
we consider a general hypergraph $H=(V,E)$ where $V=\{1,\dots,n\}$ is the set of nodes and $E = \{e_1,\dots,e_m\}$ now denotes the set of hyperedges
\cite{ouvrad20}.
Note that in our setting every node can belong to an arbitrary number of hyperedges, i.e., we allow hereditary hypergraphs \cite{bretto2013hypergraph}, which can also be used to model a simplicial complex structure. We let $B$ denote the $n\times m$ incidence matrix of $H$, defined as follows: the rows of $B$ correspond to nodes while its columns correspond to hyperedges and we have $B_{i,e}=1$ if node $i$ takes part in hyperedge $e$, that is,  
\[
B_{i,e} = 
\begin{cases}
1 & i\in e \\
0 & \text{otherwise.}
\end{cases}
\]

In many situations we have access to external node and edge weights in the form of weight functions $\nu:\RR^V\to\RR_+$ and $w:\RR^E\to\RR_+$. For example, if the hypergraph data represents grocery goods (the nodes) and the list of items purchased by each customer in one visit to a supermarket (the hyperedges), then $\nu(i)$ can be an indicator of the price of item $i$, while $w(e)$ may correspond to the profit the supermarket has made with the list of items in $e$. We use two diagonal matrices $N$ and $W$ to take into account for these weights, defined by $N = \Diag(\nu(1), \dots, \nu(n))$ and 
$W = \Diag(w(e_1),\dots,w(e_m))$.

Note that a hypergraph where all edges have exactly two nodes is a standard graph. In that case, 
we have $BWB^\top = A + D$ where $A$ is the adjacency matrix of the graph and $D = \Diag(d_1, \dots, d_n)$ is the digonal matrix of the weighted node degrees $d_i = \sum_{e\in E}w(e)B_{i,e} = (BW\one)_i$.
Similarly, for a general hypergraph $H$, we have $BWB^\top = A_H+D_H$ where $A_H$ and $D_H$ are the adjacency and degree matrices of the clique-expansion graph $G_H=(V,E_H)$ associated with $H$. The clique-expansion graph is a graph on the same vertex set of $H$, obtained by adding a weighted clique connecting all nodes in each hyperedge of $H$. More precisely, given $H=(V,E)$,  we have 
$$
(A_H)_{ij} = \sum_{e: \, i,j\in e}w(e) , \qquad (A_H)_{ii}=0\, .
$$
Thus, $ij\in E_H$ if and only if $i\neq j$ participates in at least one hyperedge of $H$. Similarly, the degree matrix $D_H=\Diag(d_1, \dots, d_n)$ is the diagonal matrix whose diagonal entries are the weighted degrees of the nodes in the hypergraph, i.e.\ $d_i = \sum_{e:i\in e}w(e)=(BW\one)_i$.

\subsection{Node and edge nonlinear hypergraph eigenvector centrality}\label{sec:centrality-hypergraph}

We describe here a spectral (thus mutually reinforcing) model for node and edge centralities of hypergraphs. Suppose $H = (V,E)$ is given with $|V|=n$ and $|E|=m$, and let $\b x \in \RR^n$, $\b y \in \RR^m$ be nonnegative vectors whose entries will provide centrality scores for the nodes and 
hyperedges of $H$, respectively. We would like the importance $y_e$ for an edge $e\in E$ to be a nonnegative number proportional to a function of  the importances of the nodes in $e$, for example  $y_e \propto \sum_{i\in e}\nu(i)x_i$. Similarly, we require that the centrality $x_i$ of node $i\in V$ is a nonnegative number proportional 
to a function of the importances of the edges it participates in, for example $x_i \propto \sum_{e: i\in e}w(e)y_e$. As the centralities $x_i$ and $y_e$ are all nonnegative, these sums coincide with the weighted $\ell^1$ norm of specific sets of centrality scores. Thus, we can generalize this idea by considering the weighted $\ell^p$ norm of node and edge importances. This leads to 
\[
x_i \propto \Big(\sum_{e: i\in e}w(e)y_e^p\Big)^{1/p},\qquad y_e \propto \Big(\sum_{i\in e}\nu(i)x_i^q\Big)^{1/q},  
\]
for some $p,q\geq 1$. More generally, we can consider four functions $f,g,\varphi,\psi:\RR_+\to\RR_+$ of the nonnegative real line $\RR_+$ and require that 
\[
x_i \propto g\Big(\sum_{e: i\in e}w(e)f(y_e)\Big),\qquad y_e \propto \psi\Big(\sum_{i\in e}\nu(i)\varphi(x_i)\Big)  \, .
\]
If we extend real functions on vectors by defining them as mappings that act in a componentwise fashion, the previous relations can be compactly written as the following constrained nonlinear  equations
\begin{equation}\label{eq:NEP}
    \begin{cases}
    \lambda \b x = g\big(BW f(\b y)\big) & \\
    \mu \b y = \psi\big(B^\top N \varphi(\b x)\big)
    \end{cases}\qquad \b x,\b y > 0, \quad  \lambda, \mu > 0  \, .
\end{equation}

If $f,g,\psi$ and $\varphi$ are all identity functions, then \eqref{eq:NEP} boils down to a linear system of equations which is structurally reminiscent of the 
HITS centrality algorithm for directed graphs, \cite{arrigo2019multi,k99}.
HITS computes two different node centralities: a \textit{hub} centrality, which is proportional to the authority score of neighboring nodes, and at the same time, \textit{authority} centrality, which is proportional to the hub score of neighboring nodes. Similarly, \eqref{eq:NEP} with $f=g=\varphi=\psi=\text{id}$ defines two centralities, but in this case
they relate to nodes and hyperedges: the importance of a node is proportional to the sum of the importances of the hyperedges it belongs to and, vice-versa, the importancesof a hyperdge is proportional 
to the sum of the importances of the nodes it involves.

As for HITS centrality, when $f=g=\varphi=\psi=\text{id}$ and we have no edge nor node weights (i.e.\ $W,N$ are identity matrices), then  $\b x,\b y$ in  \eqref{eq:NEP} are the left and right  singular vectors of a certain matrix,  in this case $B$, and the matrix Perron--Frobenius theory tells us that if the bipartite graph with adjacency matrix 
$
\begin{psmallmatrix}
0 & B\\
B^\top & 0
\end{psmallmatrix}
$
is connected, then  \eqref{eq:NEP} has a unique solution. Instead, when either $f,g,\varphi$ or $\psi$ is not linear, even the most basic question of existence of a solution to \eqref{eq:NEP} is not straightforward. However, for homogeneous functions $f,g,\varphi$ and $\psi$, the nonlinear Perron--Frobenius theory for multihomogeneous operators \cite{gautier2019perron} allows us to give guarantees on existence, uniqueness and computability for the nonlinear singular-vector centrality model in \eqref{eq:NEP}.

Before addressing these issues, we  
investigate the system in \eqref{eq:NEP} further,
showing how it includes some previously proposed eigenvector centrality models as special cases, and 
offers additional useful flexibility.

\subsection{The linear case: eigenvector centrality for graph and line graph}\label{sec:linear-case}
When $H$ is a standard simple and unweighted graph $H=G=(V,E)$, with binary adjacency matrix $A$, it is easy to verify that  $BB^\top = A + D$, where $D$ is the diagonal matrix of the node degrees. Moreover, $B^\top B = A^{(e)} + \Delta$, where $A^{(e)}$ is the adjacency matrix of the line graph of $G$ and $\Delta = \Diag(\delta_1,\dots,\delta_m)$ is a diagonal matrix whose diagonal entries count the number of nodes each edge contains. In this case, each edge has exactly two nodes, so $\Delta = 2I$. 
The corresponding identities hold if we allow weights on the nodes and on the edges of $G$, namely $BWB^\top = A + D$, where now $A$ and $D$ are the weighted adjacency and degree matrix of $G$, and $B^\top NB = A^{(e)} + \Delta$ with 
\begin{equation}\label{eq:adj-mx-line-graph-of-G}
    (A^{(e)})_{e_1,e_2} = \begin{cases}\nu(i) & e_1\neq e_2 \text{ and they share the node }i\text{ in }G\\
0 & \text{otherwise}
\end{cases}
\end{equation}
and $\delta_{e}= \sum_{i\in e}\nu(i)$.

It follows that, when $H$ is a graph, the node-edge eigenvector model in \eqref{eq:NEP} for the linear case $f=g=\varphi=\psi=\text{id}$ is strongly related to the standard eigenvector centrality applied to $G$ and its line graph. In fact, by using the two identities $\lambda \b x = BW\b y$ and $\mu \b y = B^\top N \b x$ we obtain  
$$
\begin{cases}
\tilde \lambda\,  \b x = BWB^\top N\b x = (A+D)N\b x & \\
\tilde \lambda\, \b y = B^\top N BW \b y = (A^{(e)}+\Delta)W \b y
\end{cases}
$$
with $\tilde \lambda = \lambda \mu$. Thus, $\b x$ and $\b y$ are the Perron eigenvectors of diagonally perturbed adjacency matrices of the graph and the line graph.  

A similar connection holds for the general hypergraph case. In that case, the node-edge eigenvector model in \eqref{eq:NEP} for the linear choices $f=g=\varphi=\psi=\text{id}$ is tightly connected to the eigenvector centrality of the clique-expansion graph of $H$ its line graph. Precisely we have
$$
BWB^\top = A_H + D_H \qquad \text{and}\qquad B^\top N B = A^{(e)}_H + \Delta_H, 
$$
where $A^{(e)}_H$ and $\Delta_H$ are the adjacency and degree matrix of the line graph of $G_H$, as defined in~\eqref{eq:adj-mx-line-graph-of-G}.

\subsection{Tensor-based eigenvector centrality for uniform hypergraphs and its extension} %\subsection{Tensor-based eigenvector centrality for uniform hypergraphs}
\label{sec:tensor-centrality}
In this subsection we find an intriguing connection between recently proposed tensor-based eigenvector centralities for uniform hypergraphs  and the  nonlinear singular vector model proposed in  \eqref{eq:NEP}. In particular, we show that the centrality models based on tensor eigenvectors are a special case of our general nonlinear singular vector framework and that this 
new approach allows us to extend tensor eigenvector centralities to general non-uniform hypergraph data. We first review the uniform hypergraph case.

A hypergraph is said to be 
$k$-uniform if $|e|=k$ for all $e\in E$. 
Thus, a $2$-uniform hypergraph is a graph in the standard sense.
The concept of eigenvector centrality has been extended to the case of $k$-uniform hypergraphs with $k>2$ by means of the hypergraph adjacency tensor \cite{benson2019three}. As every hyperedge contains exactly $k$ nodes, we can associate to $H$ a tensor 
$\mathcal A$ 
with $k$ indices $\mathcal A_{i_1,\dots,i_k}$ such that $\mathcal A_{i_1,\dots,i_k} = w(e)$ if the hyperedge $e = \{i_1,\dots,i_k\}$ belongs to $E$, and $\mathcal A_{i_1,\dots,i_k}=0$ otherwise. Clearly, $\mathcal A$ coincides with the adjacency matrix of the graph when $k=2$. Different notions of tensor eigenvectors are available in the literature (see e.g.,
 \cite{cipolla2019shifted,gautier2019unifying}). In particular, for $p>0$, a $\ell^p$  eigenvector for $\mathcal A$ is a vector $\b x$ such that 
\begin{equation}\label{eq:tensor_eig}
    \sum_{i_2,\dots,i_k}\mathcal A_{i_1,i_2,\dots,i_k}x_{i_2}x_{i_3}\cdots x_{i_k} = \lambda \, x_{i_1}^{\, p} \, .
\end{equation}
The special cases $p=1$ and $p={k-1}$ correspond to so-called $Z$- and $H$-eigenvectors for $\mathcal A$. Note that both $Z$- and $H$-eigenvectors boil down to standard matrix eigenvectors when $k=2$. However, when $k>2$ matters are significantly different. In particular, the eigenvector centrality defined by \eqref{eq:tensor_eig} is no longer linear when $k>2$, in the sense that 
taking a linear combination of 
eigenvectors does not automatically produce an eigenvector.

This nonlinearity makes the analysis and the computation of solutions to \eqref{eq:tensor_eig} more challenging than standard eigenvector centralities for graphs (i.e., $k=2$). However, it has been observed in, e.g., 
\cite{gautier2019contractivity,tudisco2017node,gautier2019unifying} that the nonlinear eigenvector equation \eqref{eq:tensor_eig} admits a unique solution 
that can be computed to an arbitrary precision if the tensor $\mathcal A$ is not too sparse and if the exponent $p$ satisfies certain assumptions. In particular, for a large range of values of $p$ all these properties hold with almost no requirement on the connectivity of the underlying hypergraph \cite{cipolla2019shifted}. From this point of view, the nonlinearity yields a remarkable advantage rather than a disadvantage.

Extending tensor eigenvector centrality models to the non-uniform hypergraph setting is not straightfoward. The next theorem shows that our nonlinear singular vector model in \eqref{eq:NEP} provides a natural framework to this end. In fact, Theorem \ref{thm:tensor-eig} shows that, for uniform hypergraphs,  the tensor-based eigenvector centrality in \eqref{eq:tensor_eig} is a particular case of \eqref{eq:NEP} for logarithmic- and exponential-based nonlinear functions. Thus, when used on non-uniform hypergraphs, these choices of functions in \eqref{eq:NEP} yield a tensor eigenvector like centrality for general hypergraphs. We will further discuss this extension in the \textit{Experiments} section \S\ref{sec:comp}. 

Let $H$ be a  $k$-uniform hypergraph.
As observed above, when $k=2$ we have $BWB^\top = A + D$, where $A$ and $D$ are the adjacency and degree matrices of the graph $H$, respectively. Thus, for $p=2$ we can easily rewrite the eigenvector centrality equation  \eqref{eq:tensor_eig} in terms of the incidence matrix, as   \eqref{eq:tensor_eig} coincides with $A\b x = \lambda \b x$ and we have $A\b x= (BWB^\top - D)\b x = \lambda \b x$.  If the vector $\b x$ is entrywise positive, we can add a nonlinear transformation in the eigenvector equation to obtain a similar relation for any $k\geq 2$ and any $p\geq 1$. More precisely, we have
\begin{theorem}\label{thm:tensor-eig}
Let $H$ be a $k$-uniform hypergraph with $\nu(i) = 1$ for all $i\in V$. If $\b x$ is a positive solution of  \eqref{eq:NEP} with $f(\b x) = \b x$, $g(\b x) =\b x^{1/(p+1)}$,  $\psi(\b x) = e^{\b x}$ and $\varphi(\b x) = \ln(\b x)$,  then $\b x$ is an eigenvector centrality solution of the tensor eigenvalue problem~in \eqref{eq:tensor_eig}. 
\end{theorem}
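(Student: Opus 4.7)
The plan is to unwind the two equations of \eqref{eq:NEP} in turn with the specified nonlinearities, eliminate $\mathbf{y}$, and then match the resulting expression in $\mathbf{x}$ against \eqref{eq:tensor_eig}. Since $\mathbf{x} > 0$, all divisions and logarithms encountered along the way are well-defined.

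First I would work on the second equation $\mu\mathbf{y}=\psi(B^\top N\varphi(\mathbf{x}))$. With $N=I$, $\varphi=\ln$ and $\psi=\exp$ acting entrywise, the $e$-th component of $B^\top\ln(\mathbf{x})$ is $\sum_{i\in e}\ln(x_i) = \ln\bigl(\prod_{i\in e}x_i\bigr)$, so applying $\exp$ gives the compact identity
\begin{equation*}
\mu\, y_e \;=\; \prod_{i\in e} x_i.
\end{equation*}
This is the key algebraic payoff of choosing the $\ln$/$\exp$ pair: it converts the additive action of $B^\top$ on node scores into a multiplicative combination, which is exactly the structure appearing inside \eqref{eq:tensor_eig}.

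Next I would plug this expression for $y_e$ into the first equation $\lambda\mathbf{x}=g(BWf(\mathbf{y}))$, with $f=\mathrm{id}$ and $g(\cdot)=(\cdot)^{1/(p+1)}$. Looking at the $i$-th component and raising to the power $p+1$, this yields
\begin{equation*}
\lambda^{p+1}\, x_i^{p+1} \;=\; \frac{1}{\mu}\sum_{e\,:\, i\in e} w(e)\prod_{j\in e} x_j.
\end{equation*}
Since $i\in e$ inside the sum, I factor $x_i$ out of each product $\prod_{j\in e}x_j = x_i\prod_{j\in e,\,j\neq i}x_j$ and divide by $x_i>0$, obtaining
\begin{equation*}
\mu\lambda^{p+1}\, x_i^{p} \;=\; \sum_{e\,:\, i\in e} w(e)\prod_{j\in e,\,j\neq i} x_j.
\end{equation*}

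Finally, I would identify the right-hand side with the tensor contraction in \eqref{eq:tensor_eig}. By the definition of $\mathcal{A}$ for a $k$-uniform hypergraph, the sum $\sum_{i_2,\dots,i_k}\mathcal{A}_{i,i_2,\dots,i_k}x_{i_2}\cdots x_{i_k}$ collects, for each hyperedge $e$ containing $i$, the product of the $x_j$ over the other $k-1$ nodes of $e$ (weighted by $w(e)$ and symmetrization constants arising from the $(k-1)!$ orderings of those nodes). Absorbing this fixed combinatorial factor into the eigenvalue, we conclude that $\mathbf{x}$ satisfies \eqref{eq:tensor_eig} with eigenvalue proportional to $\mu\lambda^{p+1}$. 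The only delicate point is the symmetrization/normalization convention for $\mathcal{A}$, but since this amounts to a single global constant independent of $i$, it merely rescales the eigenvalue and does not affect the eigenvector equation itself.
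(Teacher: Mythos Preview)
Your proof is correct and follows essentially the same route as the paper: both arguments compute $\mu y_e=\prod_{j\in e}x_j$ from the $\ln/\exp$ pair, substitute into the first equation, raise to the $(p+1)$th power, divide through by $x_i$, and identify the resulting sum over hyperedges containing $i$ with the tensor contraction in \eqref{eq:tensor_eig}. The only cosmetic difference is that you make the $(k-1)!$ symmetrization constant explicit before absorbing it into the eigenvalue, whereas the paper performs the identification $\sum_{e:i_1\in e}w(e)\prod_{j\in e}x_j=\sum_{i_2,\dots,i_k}\mathcal A_{i_1,\dots,i_k}x_{i_1}\cdots x_{i_k}$ directly and silently.
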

% \begin{proof}\new{See Section \ref{sec:methods}.}\end{proof}

\subsection{Existence, uniqueness and computation of nonlinear hypergraph centralities}\label{sec:theory}

In this section we discuss existence, uniqueness, positivity, maximality and computation of the node and edge hypergraph centrality defined by the general nonlinear singular value problem in \eqref{eq:NEP}. Analogously to the linear case, these properties will follow directly from the nonlinear Perron--Frobenius theorem for multihomogeneous mappings \cite{gautier2019perron}, which extends the classical Perron--Frobenius theory for nonnegative matrices to a much broader class of nonlinear nonnegative operators. 

 To this end, we recall that a function $\varphi$ is said to be $\alpha$-homogeneous if $\varphi(\lambda \b u) = \lambda^\alpha \varphi(\b u)$ for all $\lambda\geq 0$. In this case we say that $\alpha$ is the homogeneity degree of $\varphi$. Furthermore, we say that $\varphi$ is  order preserving if $\varphi(\b v) \geq \varphi(\b u)$ for all $\b v\geq \b u$; whereas we say that $\varphi$ is  positive if $\varphi(\b v) > 0$ for all $\b v> 0$. We have
\begin{theorem}\label{thm:existence_uniqueness}
 Let $f,g,\varphi,\psi$ be order preserving and homogeneous of degrees $\alpha,\beta,\gamma,\delta$, respectively. Define the coefficient $\rho = |\alpha\beta\gamma\delta|$. If either% Consider the $2\times 2$ matrix 
 % $$
 % M = \begin{pmatrix}
 % 0& |\alpha \beta| \\
 % |\gamma\delta| & 0
 % \end{pmatrix}
 % $$
 % and let $\rho(M)$ denote its spectral radius. If either 
 \begin{itemize}
     \item[P1.] $\rho<1$, or
     \item[P2.] $\rho=1$;  $f,g,\varphi,\psi$ are differentiable and positive maps;  the bipartite graph with adjacency matrix 
$
\begin{psmallmatrix}
0 & BW\\
B^\top N & 0
\end{psmallmatrix}
$
is connected
 \end{itemize}
 then there exist unique $\b x,\b y > 0$ (up to scaling) and unique $\lambda, \mu >0$ solution of \eqref{eq:NEP}.  
\end{theorem}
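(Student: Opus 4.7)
The plan is to recast \eqref{eq:NEP} as the eigenproblem for a single multihomogeneous map on a product of positive cones and then invoke the nonlinear Perron--Frobenius theorem for multihomogeneous operators from \cite{gautier2019perron}. Define
$$
F(\b x,\b y) \;=\; \bigl(g(BW f(\b y)),\ \psi(B^\top N \varphi(\b x))\bigr),
$$
so that \eqref{eq:NEP} amounts to finding $(\b x,\b y)>0$ and $\lambda,\mu>0$ with $F(\b x,\b y)=(\lambda \b x,\mu \b y)$.

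First I would verify the two structural properties needed by the multihomogeneous Perron--Frobenius theorem. Order preservation of $F$ follows because $BW$ and $B^\top N$ are nonnegative linear maps and $f,g,\varphi,\psi$ are order preserving by hypothesis, and the composition of order preserving maps is again order preserving. For the homogeneity structure, the first block depends only on $\b y$ and, by the chain of scalings $f(s\b y)=s^\alpha f(\b y)$, linearity of $BW$, and the $\beta$-homogeneity of $g$, is $(\alpha\beta)$-homogeneous in $\b y$ and $0$-homogeneous in $\b x$; symmetrically, the second block is $(\gamma\delta)$-homogeneous in $\b x$ and $0$-homogeneous in $\b y$. Hence $F$ is multihomogeneous with homogeneity matrix
$$
M=\begin{pmatrix} 0 & \alpha\beta\\ \gamma\delta & 0\end{pmatrix},
$$
whose Perron root equals $\sqrt{|\alpha\beta\gamma\delta|}=\sqrt{\rho}$.

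Under P1 we have $\sqrt{\rho}<1$, which places us in the strictly contractive regime of \cite{gautier2019perron}: $F$ becomes a strict contraction on the interior of $\RR^n_+\times\RR^m_+$ equipped with the product Hilbert/Thompson metric, and a Banach-type fixed-point argument yields a unique positive eigenpair $(\b x,\b y)$ up to the natural rescaling, together with unique $\lambda,\mu>0$. Under P2 we are at the boundary case $\sqrt{\rho}=1$, in which contractivity degenerates to nonexpansiveness. Here I would apply the primitive/irreducible version of the theorem: positivity and differentiability of $f,g,\varphi,\psi$ guarantee that the Jacobian $DF(\b x,\b y)$, after stripping strictly positive diagonal factors coming from the derivatives of the four maps, has exactly the sparsity pattern of the bipartite adjacency $\begin{psmallmatrix} 0 & BW\\ B^\top N & 0\end{psmallmatrix}$; the stated connectivity of this bipartite graph then lifts to irreducibility of $DF$ at any positive point, which is precisely the hypothesis delivering a unique strictly positive Perron eigenvector at the boundary.

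The main obstacle will be the P2 case, specifically translating the abstract irreducibility assumption of the multihomogeneous Perron--Frobenius theorem into the clean bipartite connectivity statement above; this is where the positivity of the derivatives of $f,g,\varphi,\psi$ is essential, since it prevents cancellations in $DF$ that could destroy the support pattern inherited from $BW$ and $B^\top N$. Once the eigenvector is pinned down by a normalization, uniqueness of $\lambda$ and $\mu$ is automatic from \eqref{eq:NEP}, since each scalar is then an explicit ratio of the corresponding normalized block outputs.
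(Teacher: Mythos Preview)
Your proposal is correct and follows essentially the same route as the paper: both recast \eqref{eq:NEP} as the eigenproblem for the same multihomogeneous map $F$, identify the homogeneity matrix with spectral radius $\sqrt{\rho}$ (the paper records it with absolute values $|\alpha\beta|,|\gamma\delta|$, as in \cite{gautier2019perron}), and then invoke the multihomogeneous Perron--Frobenius theorem---directly in the contractive case P1, and in case P2 via the observation that the support of $DF$ at any positive point coincides with the bipartite adjacency $\begin{psmallmatrix}0&BW\\ B^\top N & 0\end{psmallmatrix}$. The only cosmetic difference is that the paper cites the specific results \cite[Thm.~3.1, Thm.~5.2, Thm.~6.2]{gautier2019perron} separately for existence and uniqueness under P2, whereas you bundle these into one appeal.
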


By analogy with the linear case, we refer to the positive solutions of \eqref{eq:NEP} defining the hypergraph centralities as nonlinear Perron singular vectors. 

\begin{algorithm}[t]
  \DontPrintSemicolon
	\caption{Nonlinear Power Method for hypergraph centrality}\label{alg:npm}
    \KwIn{Incidence matrix $B$ of the hypergraph; diagonal weight matrices $W$ and $N$ for edges and nodes; nonlinear functions $f,g,\varphi,\psi$; desired vector norm $\|\cdot\|$; stopping tolerance $tol$ } 
	\KwOut{Centrality for nodes $\b x$ and hyperedges $\b y$ such that $\|\b x\|=\|\b y\|=1$}
      $\b x^{(0)}, \b y^{(0)} >0$  \qquad \texttt{\# Initialize with any positive vectors}\;
	 \Repeat{$\|\b x^{(r+1)}-\b x^{(r)}\|/\|\b x^{(r+1)}\| +  \|\b y^{(r+1)}-\b y^{(r)}\|/\|\b y^{(r+1)}\|< tol$}{
	   $\b u \gets \sqrt{\b x^{(r)}\, g\big(BWf(\b y^{(r)})\big)}$\qquad\,\, \texttt{\#entrywise multiplication and squareroot}\;
	   $\b v \gets \sqrt{\b y^{(r)}\, \psi\big(B^\top N\varphi(\b x^{(r)})\big)}$\qquad \texttt{\#entrywise multiplication and squareroot}\;	   $\b x^{(r+1)}\gets \b u / \|\b u\|$\;
	   $\b y^{(r+1)}\gets \b v/\|\b v\|$\; 
	   }
\end{algorithm}

On top of existence and uniqueness guarantees, the matrix Perron--Frobenius theorem provides us with the convergence of the so-called power method, a very powerful  tool for computing the Perron singular vectors. 
In the case of nonnegative matrices, however,  one needs to require the bipartite graph of $A=\begin{psmallmatrix}
0 & BW\\
B^\top N & 0
\end{psmallmatrix}$ to be aperiodic (i.e., the matrix $A$ is primitive) in order to ensure the convergence of the power method for an arbitrary choice of the starting point, as connectedness alone is not enough. As an example, consider the $2\times 2$ matrix $A = \begin{psmallmatrix}
0 & 1\\
1 & 0
\end{psmallmatrix}$ which acts on vectors by swapping the first and the second coordinates. The graph of $A$ is connected but not aperiodic and, indeed, the sequence $\b x^{(r+1)} = A\b x^{(r)}$ does not converge in general (it converges only if $\b x^{(0)}$ has constant entries). 

Much like the matrix case, one can compute the nonlinear Perron singular vectors in \eqref{eq:NEP} via what we call \textit{Nonlinear Power Method}, described in Algorithm \ref{alg:npm}. However, similarly to the nonlinear eigenvalue problem for tensors \cite{cipolla2019shifted,gautier2019unifying}	, the nonlinear power method converges under significantly milder conditions than its more common linear counter part. In particular, no aperiodicity assumption is required and, depending on the homogeneity degree of the nonlinear functions, global convergence may be ensured even for disconnected graphs.   The following theorem describes the convergence of Algorithm \ref{alg:npm} to the solution of \eqref{eq:NEP}.

\begin{theorem}\label{thm:convergence}
Under the assumptions and notation of Theorem \ref{thm:existence_uniqueness}, let $\b x^{(r)}$, $\b y^{(r)}$ be the sequences generated by Algorithm \ref{alg:npm}. If either P1 or P2 holds, 
 then $\b x^{(r)}$ and $\b y^{(r)}$ converge to the unique positive solutions $\b x^*,\b y^*$ of \eqref{eq:NEP} such that $\|\b x^*\|=\|\b y^*\|=1$. Moreover,  if P1 holds, then the convergence is linear, i.e.:
 $$
 \|\b x^{(r)}-\b x^*\|+\|\b y^{(r)}-\b y^*\| = O( \rho^r).
 $$
\end{theorem}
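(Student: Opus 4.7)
The plan is to view Algorithm \ref{alg:npm} as a fixed point iteration for a multihomogeneous order-preserving operator on the interior of the positive cone, analyzed through Hilbert's projective metric $d_H$, following the framework of \cite{gautier2019perron} already invoked for Theorem \ref{thm:existence_uniqueness}. Define $T(\b x,\b y) = \big(g(BWf(\b y)),\, \psi(B^\top N\varphi(\b x))\big)$. Because $f,g,\varphi,\psi$ are order preserving and homogeneous of degrees $\alpha,\beta,\gamma,\delta$, and because the incidence operators $\b u \mapsto BW\b u$ and $\b u \mapsto B^\top N \b u$ are nonexpansive in $d_H$, the two components of $T$ are order preserving and homogeneous of degrees $\alpha\beta$ and $\gamma\delta$; hence the twice-iterated map $T\circ T$ restricted to either factor is Lipschitz with constant $|\alpha\beta\gamma\delta| = \rho$ in the Hilbert metric.

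Next, I would reinterpret the geometric-mean updates $\b u = \sqrt{\b x^{(r)}\cdot g(BWf(\b y^{(r)}))}$ and $\b v = \sqrt{\b y^{(r)}\cdot \psi(B^\top N\varphi(\b x^{(r)}))}$, together with the normalization step, as a damped (Krasnoselskii--Mann type) version of the bare iteration $(\b x,\b y)\mapsto T(\b x,\b y)$. This damping preserves the fixed points of $T$ in the quotient cone, and crucially it is the device that tames the period-two oscillation illustrated by the $2\times 2$ swap matrix example immediately preceding the theorem: the square root of a product of two positive vectors is Lipschitz with constant $1/2$ in the part metric of each factor, so the damped map is always a strict contraction whenever $T$ is strictly contractive and is \emph{paracontractive} even when $T$ is merely nonexpansive.

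In case \textbf{P1} ($\rho<1$) the argument is then short: combining the Lipschitz bound on $T$ with the averaging gives that one iteration of Algorithm \ref{alg:npm} is a strict contraction in $d_H\oplus d_H$ with factor at most $\rho$. Since $d_H$ makes the projective quotient of the positive cone complete, Banach's fixed point theorem applies and yields a unique normalized fixed point which, by Theorem \ref{thm:existence_uniqueness}, must be $(\b x^*,\b y^*)$, with $d_H(\b x^{(r)},\b x^*) + d_H(\b y^{(r)},\b y^*) = O(\rho^r)$. A standard bound translating Hilbert-metric convergence into norm convergence of unit-normalized vectors on the cone completes the stated linear rate.

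In case \textbf{P2} ($\rho=1$) strict contractivity fails and the main obstacle appears: I need to promote nonexpansiveness to convergence. Here I would use the three P2 hypotheses together. Differentiability and strict positivity of $f,g,\varphi,\psi$ make $T$ smooth on the interior of the positive cone; connectedness of the bipartite graph associated with $BW$ and $B^\top N$ guarantees that the graph of $T$ is strongly connected, so $T$ has no nontrivial periodic orbit in $d_H$. Combining this with the averaging (which strictly decreases $d_H$ unless the two factors are projectively equal, i.e.\ unless $(\b x^{(r)},\b y^{(r)})$ is already a fixed point) forces the sequence to be asymptotically regular and monotonically decreasing in its Hilbert distance to $(\b x^*,\b y^*)$. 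A compactness argument on the (projectively compact) quotient cone then forces every cluster point to be fixed, hence equal to $(\b x^*,\b y^*)$. This is exactly the setting of the abstract Perron--Frobenius convergence results for multihomogeneous maps in \cite{gautier2019perron}, and the bulk of the work in this case will consist in checking that the hypotheses of those results are implied by P2, the main subtlety being the handling of the bipartite connectedness in the two-variable setting.
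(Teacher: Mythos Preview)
Your proposal is on the right track and ultimately appeals to the same reference \cite{gautier2019perron} as the paper, but the paper's route is shorter and you miss the one clean technical step that handles the P2 case without any paracontraction or asymptotic-regularity argument.

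The paper packages the damped iteration directly as a single multihomogeneous map
\[
G(\b x,\b y)=\Big(\sqrt{\b x\, g\big(BWf(\b y)\big)},\ \sqrt{\b y\,\psi\big(B^\top N\varphi(\b x)\big)}\Big),
\]
notes that its homogeneity matrix is $\tilde M=\tfrac12(M+I)$ (so $F$ and $G$ share positive eigenvectors), and then---this is the point you do not have---writes the Jacobian of $G$ explicitly as
\[
DG(\b x,\b y)=\tfrac12\,\Diag\big(G(\b x,\b y)\big)^{-1/2}\Big(\Diag\big(F(\b x,\b y)\big)+\Diag\big((\b x,\b y)\big)\,DF(\b x,\b y)\Big).
\]
Under P2, $DF$ at a positive point has the nonzero pattern of $\begin{psmallmatrix}0&BW\\B^\top N&0\end{psmallmatrix}$, hence is irreducible; the square-root averaging adds a strictly positive diagonal block, so $DG$ is \emph{primitive}. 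Then \cite[Thm.\ 7.1]{gautier2019perron} applies to $G$ directly and gives convergence. In other words, the geometric mean is not being used as a Krasnoselskii--Mann averaging device to force asymptotic regularity; it is being used to turn an irreducible Jacobian into a primitive one, which is exactly the hypothesis the cited abstract convergence theorem needs. Your P2 sketch (no periodic orbits for $T$, strict decrease of $d_H$ under averaging, compactness) is plausible but much harder to make rigorous, and the claim that ``the graph of $T$ is strongly connected, so $T$ has no nontrivial periodic orbit'' is false as stated: that graph is bipartite and $T$ can have period-two orbits---this is precisely what the averaging is there to kill.

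For P1 your Hilbert-metric contraction idea is fine, but the contraction factor you quote is off. Your observation that $T\circ T$ is $\rho$-Lipschitz on each factor is correct, but Algorithm~\ref{alg:npm} iterates $G$, not $T$; the one-step Lipschitz matrix of $G$ in $d_H\oplus d_H$ is $\tilde M=\tfrac12(I+M)$, with spectral radius $\tfrac12(1+\sqrt\rho)$, not $\rho$. So your argument gives convergence but not the rate as you stated it.
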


\subsection{Experiments} \label{sec:comp}

% \subsection{Three singular vector model test cases}\label{sec:experiments-centrality-models}
In this section we compare  the behaviour of three node-edge eigenvector centrality models which correspond to three choices of the functions $f,g,\varphi,\psi$ in \eqref{eq:NEP}, as described below:

\begin{itemize}
\item The ``linear'' centrality model corresponds to the choice $f=g=\varphi=\psi=\mathrm{id}$ which, as discussed in the \textit{Node and edge nonlinear hypergraph eigenvector centrality} section \S\ref{sec:centrality-hypergraph}, %Section \ref{sec:linear-case}, 
essentially corresponds to the standard eigenvector centrality applied to the graph and the line graph obtained by clique-expanding the input hypergraph. 
\item The ``log-exp'' centrality model  corresponds to the choices $f=\mathrm{id}$, $\varphi(x) = \ln (x)$, $\psi(x) = \exp(x)$ and $g(x) = \sqrt{x}$. As discussed in the \textit{Tensor-based eigenvector centrality for uniform hypergraphs and its extension} section \S\ref{sec:tensor-centrality}, %Section \ref{sec:tensor-centrality},  
this choice generalizes the tensor eigenvector centrality proposed in \cite{benson2019three}
to the case of nonuniform hypergraphs.
In fact, when the hypergraph is uniform, we have already observed in Theorem \ref{thm:tensor-eig} that the node centrality defined via \eqref{eq:NEP} with this choice of $f,g,\varphi,\psi$ boils down to a $Z$-eigenvector of the adjacency tensor of the hypergraph. Similarly, for a general hypergraph, from \eqref{eq:NEP} we get
$$
\mu y_e = \psi(\sum_{j\in e} \nu(j) \varphi(x_j)) = \exp(\sum_{j\in e} \nu(j) \ln(x_j)) = \prod_{j\in e}x_j^{\nu(j)}.
$$
Thus, if $\b x, \b y$ are  nonnegative vectors satisfying \eqref{eq:NEP} there exists a positive $\tilde \lambda$ such that 
\begin{equation}\label{eq:log-exp-node}
    \tilde \lambda x_i^2 =  x_i^{\nu(i)} \sum_{e : i \in e} w(e) \prod_{j\in e\setminus \{i\}}x_j^{\nu(j)} \, .
\end{equation}
Note in particular that, when the input hypergraph has binary node weights, i.e., $\nu(i) = 1$ for all $i\in V$,  \eqref{eq:log-exp-node}  corresponds to a nonuniform hypergraph version of the tensor Z-eigenvector centrality for uniform hypergraphs, precisely we have 
$$
\tilde \lambda x_i =   \sum_{e : i \in e} w(e) \prod_{j\in e\setminus \{i\}}x_j\, .
$$
\item The ``max'' centrality model is based on the obseravtion that the function $\mu_\alpha(\b v) = (v_1^\alpha+\cdots + v_m^\alpha)^{1/\alpha}$ is a
type of `softmax': when $\alpha \to \infty$,  $\mu_\alpha(\b v)$ converges to $\max(\b  v) = \max\{v_1,\dots, v_m\}$. More precisely, we have 
$$
\max(\b v) \leq (v_1^\alpha+\cdots + v_m^\alpha)^{1/\alpha} \leq m^{1/\alpha} \max(\b v)\, ,
$$
thus, already for $\alpha = 10$ we have $\mu_\alpha(\b v) \approx \max(\b v)$. 

Based on this observation, the proposed centrality model corresponds to the choice of  nonlinear mappings:
$f=g=\text{id}$, $\varphi(x)= x^\alpha$ and  $ \psi(x) = \varphi^{-1}(x) = x^{1/\alpha}$, for $\alpha = 10$. 
Notice that,  with this choice of $\alpha$, the max node centrality $x_i$ is large when $i$ is part of at least one important edge. 
In fact, from \eqref{eq:NEP} we have 
$$
x_i \approx \max\{y_e : e \supset i\}\, .
$$

\end{itemize}

\subsubsection{Illustrative example: hypergraph sunflower}
A sunflower is a hypergraph whose hyperedges all have one common intersection in one single node, called the \textit{core}. Let $u\in V$ be that intersection. Also let $r$ be the number of \textit{petals} (the hyperedges) each containing $|e_i|$ nodes, for $i=1,\dots,r$. By definition $u\in e_i$ for all $i$.  Further, 
a node $v\in e_i$ and $v\in e_j$ for $i\neq j$ if and only if $v=u$.

\begin{figure}[t]
     \centering
     \begin{subfigure}[b]{0.9\textwidth}
         \centering
        \includegraphics[width=.9\textwidth]{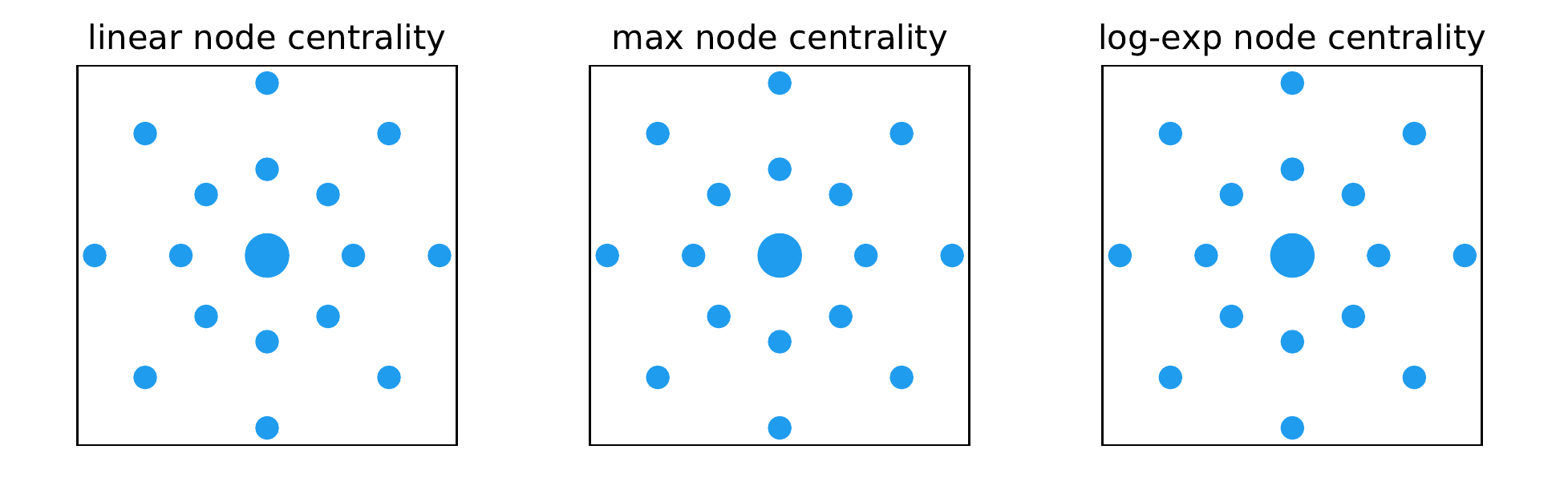}
        \caption{}
        \label{fig:uniform-sunflower}
     \end{subfigure}
     \\
     \begin{subfigure}[b]{0.9\textwidth}
         \centering
        \includegraphics[width=.9\textwidth]{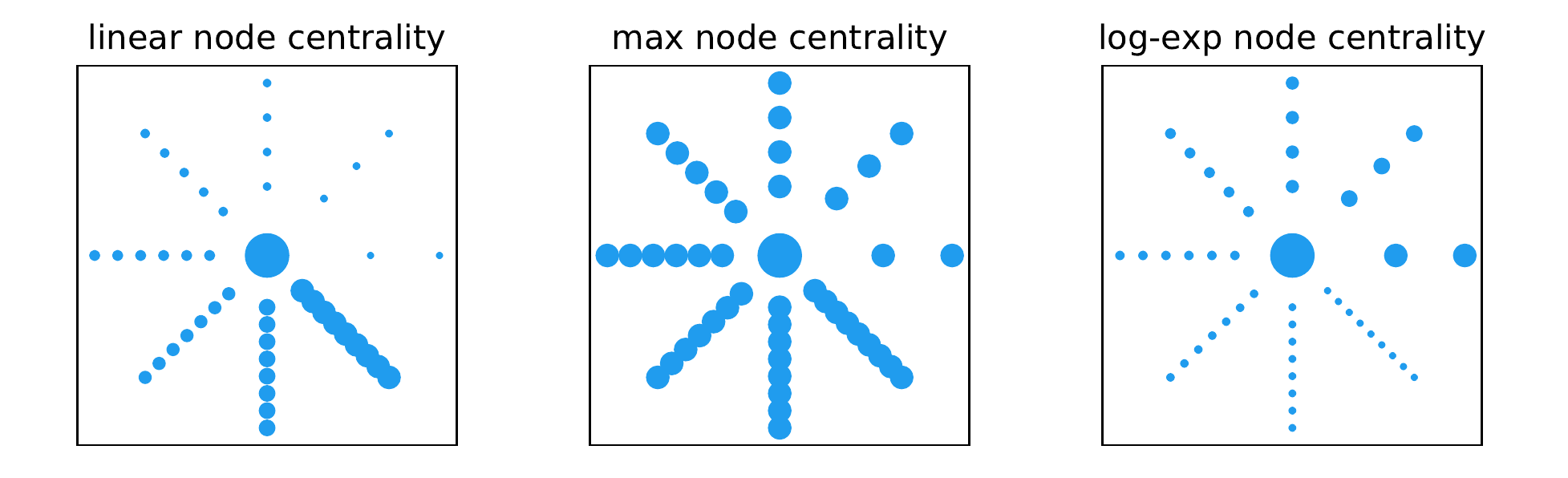}
        \caption{}
        \label{fig:non-uniform-sunflower}
     \end{subfigure}

    \caption{\textbf{Example of node centralities on sunflower hypergraphs.}\\ Node centrality scores for the ``linear'', ``log-exp'' and ``max'' centrality models on the two example hypergraph sunflowers. The corresponding centrality model is specified on top of each panel. Dots represent the hypergraph nodes and their size is proportional to their centrality value. %\eqref{fig:uniform-sunflower} 
    (a) shows results on a uniform sunflower with eight petals each containing three nodes; %\eqref{fig:non-uniform-sunflower} 
    (b) shows results on a hypergraph sunflower with eight petals containing $3,4,\dots, 10$ nodes, respectively.}
    \label{fig:sunflower}
\end{figure}

%\paragraph{Uniform sunflower.}
Let us first consider the case of a uniform sunflower. This case corresponds to the setting where all the petals have the same number of nodes, i.e.\ $|e_i|=k+1$ for all $i$ and for some integer $k$. % we say that the hypergraph is a uniform sunflower. 
The tensor eigenvector centrality of a uniform sunflower is studied for example in \cite{benson2019three}. In that case we can assume that all the hyperedges have the same centrality score and that the same holds for all the nodes, besides the core, which is assigned a specific value. 

Assuming no weights on nodes or hyperedges, 
by symmetry we may impose the constraints $x_{v_i}=x_v$ for all $v_i\neq u$ and $y_e = y$ for all $e\in E$
in \eqref{eq:NEP} to obtain
$$
x_v \propto g(f(y)),\qquad  x_u \propto g(rf(y)), \qquad y \propto \psi(\varphi(x_u) + k\varphi(x_v)). 
$$
So, for example, with the choices of Theorem \ref{thm:tensor-eig} we get $x_u/x_v = g(r) = r^{1/(p+1)}$ which coincides with the value computed in \cite{benson2019three}, for the two choices $p = 1$ and $p=m-1$, i.e., the tensor $Z$-eigenvector and $H$-eigenvector based centralities, respectively. More generally, if $g$ is homogeneous of degree $\beta$ we have
\begin{equation}\label{eq:sunflower-centrality-ratio}
    \frac{x_u}{x_v} \propto r^\beta\, .
\end{equation}
This shows that the node centrality assignment in the case of a uniform sunflower hypergraph only depends on the homogeneity degree of $g$ and, in particular,  when $\beta\to 0$ all the centralities tend to coincide, while $x_u > x_v$ for all $\beta>0$,  confirming and  extending the observation in \cite{benson2019three} for the setting of uniform hypergraph centralities based on tensor eigenvectors. Figure \ref{fig:uniform-sunflower} illustrates this behaviour on an example uniform sunflower hypergraph with eight petals ($r=8$) each having three nodes ($k=3$). The figure shows the nodes of the hypergraph with a blue dot whose size is proportional to its centrality value computed according to the ``linear'', ``log-exp'' and ``max'' centrality models. %Section \ref{sec:comp}.  
The value of $\beta$ for these three centralities is $1$ for both the `max' and the `linear' centrality', and $1/2$ for `log-exp' centrality'. Thus, all the three models assign essentially the same centrality score: the core node $u$ has strictly larger centrality, while all other nodes have same centrality score. Similarly, the computed edge centrality is constant across all models and all petals.

% \paragraph{Generic sunflower.}
The situation is different for the case of a non-uniform hypergraph sunflower. In this case, we have $r$ petals each containing an arbitrary number of nodes. 
The computational results in Figure~\ref{fig:non-uniform-sunflower} indicate that the ``linear'', ``log-exp'' and ``max'' centrality models 
capture significantly different centrality properties:
All three models recognize the core node as the most central one, however while the `linear' model favours nodes that belong to large hyperedges, the multiplicative `log-exp' model behaves in the opposite way assigning a larger centrality score to nodes being part of small hyperedges. Finally, the `max' model behaves like in the uniform case, assigning the same centrality value to all the nodes in the petals (core node excluded). For this hypergraph, we observe that the edge centrality follows directly from the node one: for the `linear' model the edge centrality is proportional to the number of nodes in the edge, 
for the `log-exp' model it is inversely proportional to the number of nodes, while for the `max' model all edges have the same centrality. 
It would be of interest to pursue these differences analytically and hence gain further 
insights into the effect of $f,g,\varphi$ and $\psi$.

\subsubsection{Real-world hypergraph data}
In this section we analyze the proposed nonlinear node-edge hypergraph centrality model on two real-world datasets. The code used to compute the results of this section is written in \texttt{julia} and is  available at \url{https://github.com/ftudisco/node-edge-hypergraph-centrality}.

\begin{figure}[t!]
    \centering
    % Math stackexchange co-tags node centrality
    \includegraphics[width=\textwidth]{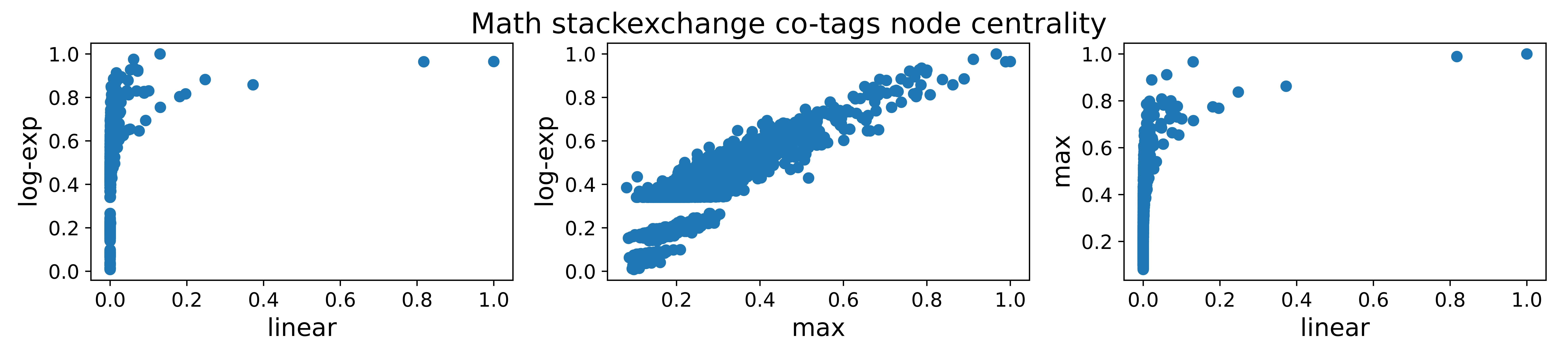}\\
    % Walmart trips node centrality
    \includegraphics[width=\textwidth]{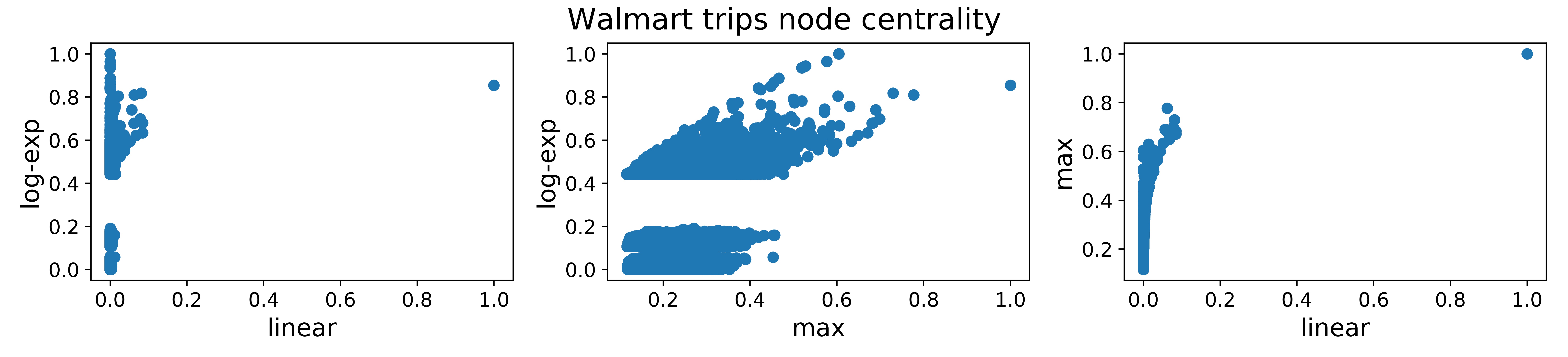}\\
    % Math stackexchange co-tags edge centrality
    \includegraphics[width=\textwidth]{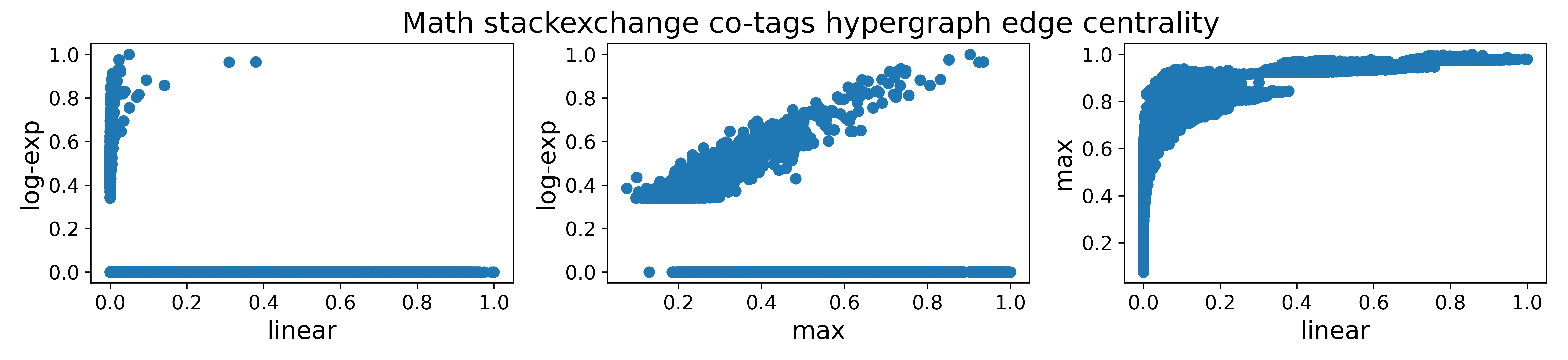}\\
    % Walmart trips edge centrality
    \includegraphics[width=\textwidth]{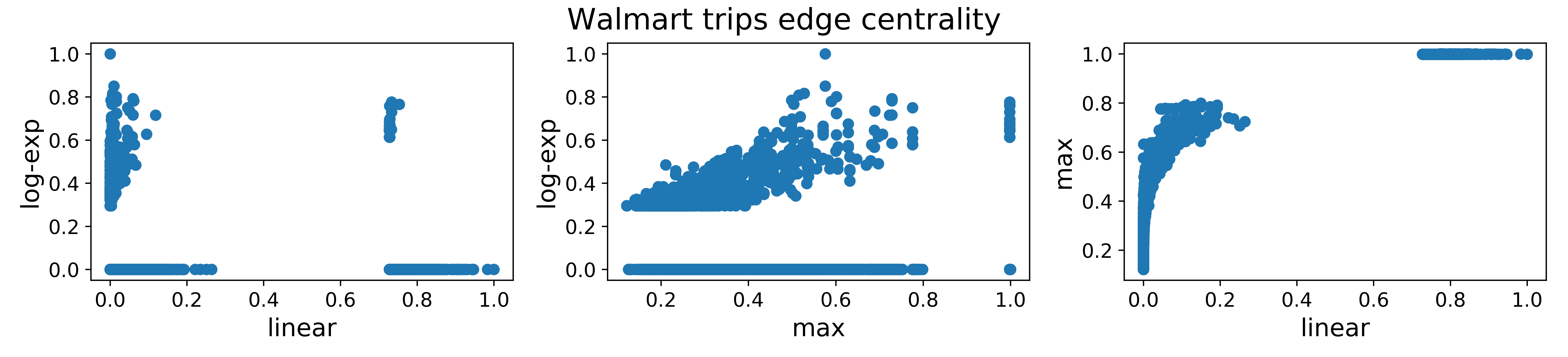}
    \caption{\textbf{Scatter plots of node and edge centralities.}\\
    Scatter plots comparing node and edge centrality scores obtained with the ``linear'', ``log-exp'' and ``max'' centrality models on the Math stackexchange co-tags and the Walmart trips hypergraphs. Each dot in the panels represents either a node or an hyperedge, with coordinates $(x,y)$ corresponding to the centrality value assigned to that node or hyperedge by two different models, as specified by the axis' labels.}
    \label{fig:scatter-edge-node-centrality}
\end{figure}

\begin{figure}[t]
    \centering
    \includegraphics[width=\textwidth]{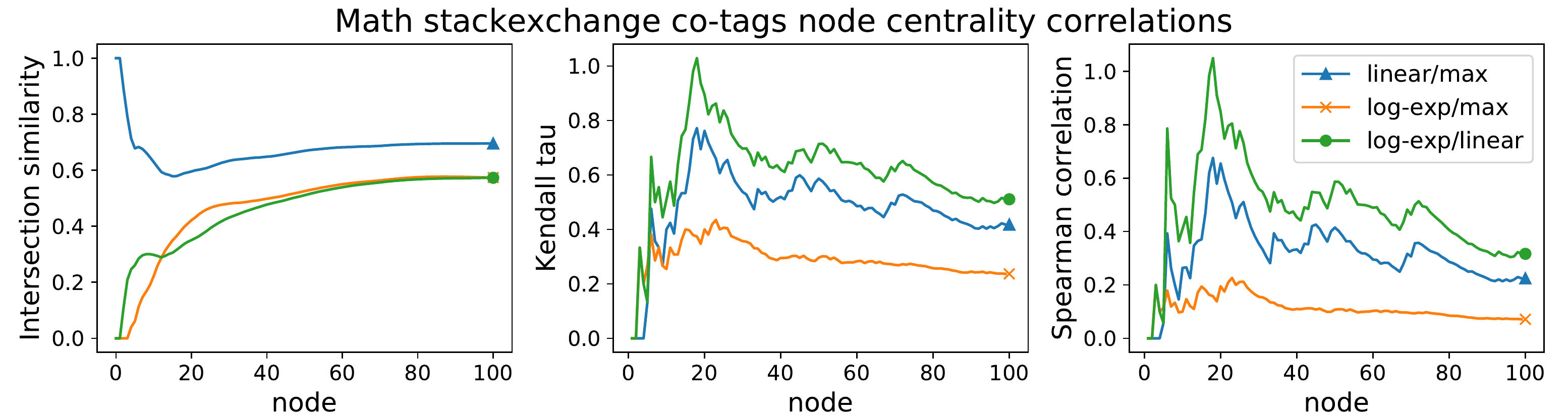}\\
    \includegraphics[width=\textwidth]{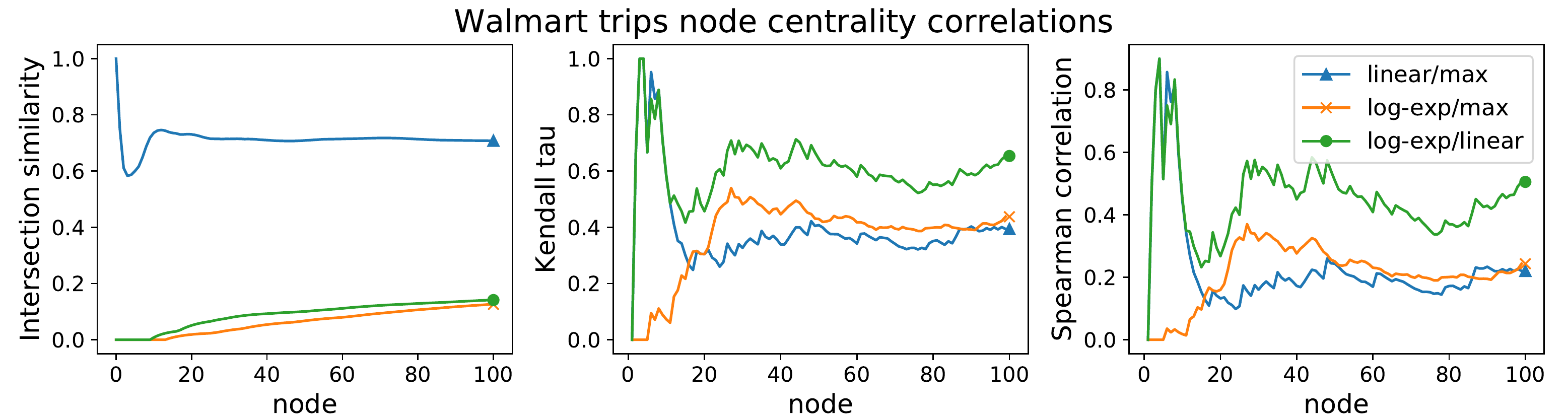}
    \caption{\textbf{Similarity between node and edge centralities. }\\ Intersection similarity, Kendall-$\tau$ correlation and Spearman correlation similarities between top $k$ nodes, for $k \in \{1,\dots,100\}$,  on the Math stackexchange co-tags and the Walmart trips hypergraph datasets. Nodes  are ranked according to  the ``linear'', ``log-exp'' and ``max'' centrality models and are paired as specified in the legend.}
    % Intersection similarity, Kendall-$\tau$ correlation and Spearman correlation similarities between the top $k$ nodes, ranked according to  the ``linear'', ``log-exp'' and ``max'' centrality models on the Math stackexchange co-tags and the Walmart trips hypergraph datasets, for $k \in \{1,\dots,100\}$.
    \label{fig:correlations}
\end{figure}

\begin{figure}
    \centering
    % Math stackexchange co-tags edge centrality vs edge weights
    \includegraphics[width=\textwidth]{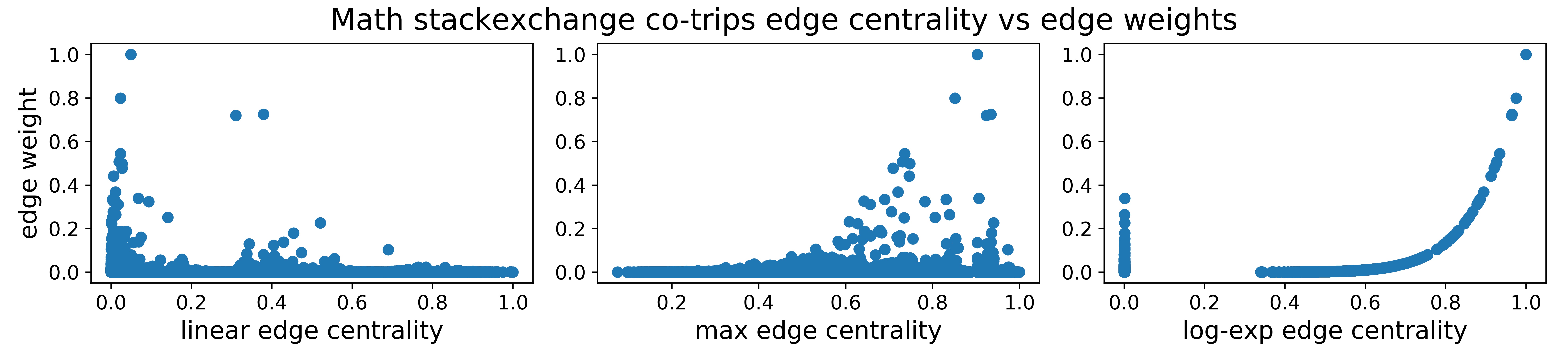}\\
    % Walmart trips edge centrality vs edge weights
    \includegraphics[width=\textwidth]{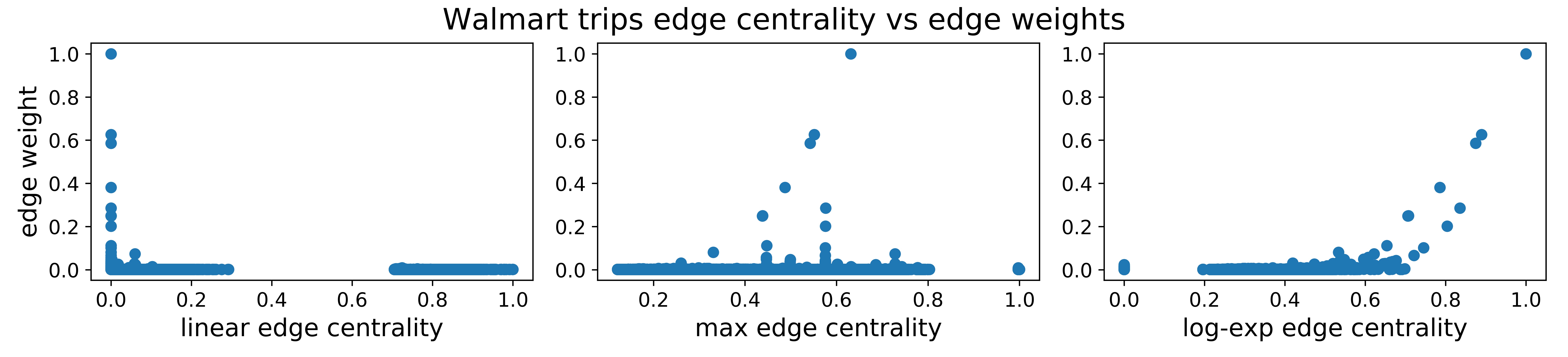}
    \caption{\textbf{Scatter plots of edge weights and centralities.}\\ Scatter plots comparing the edge weights against the edge centrality score computed with  the ``linear'', ``log-exp'' and ``max'' centrality models on the Math stackexchange co-tags and the Walmart trips hypergraphs. Each dot in the panels represents a hyperedge $e$, with coordinates $(x_e,y_e)$, where $x_e$ corresponds to the centrality value assigned to $e$ by one of the three models, whereas $y_e=w(e)$ is the weight of $e$ in the hypergraph.}
    \label{fig:scatter-edges}
\end{figure}

Scatter plots comparing node and edge centrality scores obtained with the ``linear'', ``log-exp'' and ``max'' centrality models on the Math stackexchange co-tags and the Walmart trips hypergraphs

\paragraph{Walmart trips.}
This is a transactional dataset that consists of a hypergraph describing customer trips at  Walmart:  hyperedges are sets of co-purchased products at Walmart, as released as part of the Kaggle competition in \cite{kaggle}. The hypergraph data is taken from \cite{Amburg-2020-categorical}. Products are assigned a label which points to one of ten broad departments in which each product appears on \texttt{walmart.com} (e.g., ``Clothing, Shoes, and Accessories''). There is also an additional ``Other'' class. The hyperedge weights are given by the number of times that a particular set of products appears in the list of co-purchased items. We summarize relevant statistics for this dataset in the list below:
\begin{itemize}[noitemsep]
    \item number of nodes: 88,860; number of hyperedges: 65,979;
    \item maximum edge weight: 679; mean / variance of edge weights: 1.06 / 15.24; 
    \item maximum edge size: 25; mean edge size 6.86.
\end{itemize}

\paragraph{Math stackexchange co-tags.}
This is a temporal higher-order network dataset from \cite{Benson-2018-simplicial}. Here we use the whole dataset ignoring the temporal component. The resulting dataset is a sequence of simplices where each simplex is a set of nodes. In this dataset, nodes are tags and simplices are the sets of tags applied to questions on \texttt{math.stackexchange.com}. We represent the dataset as a hypergraph with one hyperedge for each simplex. As before, the weight of each hyperedge is an integer number counting how many times that hyperedge appears in the data.  Some basic statistics of this dataset are:
\begin{itemize}[noitemsep]
    \item number of nodes: 1,629; number of hyperedges: 170,476;
    \item maximum edge weight: 16,230; mean / variance of edge weights: 4.82 / 9,430.71;
    \item maximum edge size: 5; mean edge size: 3.48.
\end{itemize}

\begin{table}[t]
    \centering
    \begin{tabular}{ccc}
    \toprule
    \textit{Linear} & \textit{Max} & \textit{Log-exp} \\
    \midrule
    Calculus & Calculus & Linear algebra \\
    Real analysis & Real analysis & Probability \\
    Integration & Linear algebra & Calculus \\
    Sequences and series & Probability & Real analysis \\
    Limits & Abstract algebra & Complex analysis \\
    Analysis & Integration & Algebra precalculus \\
    Derivatives & Sequences and series & General topology \\
    Linear algebra & Matrices & Differential equations \\
    Multivariable calculus & General topology & Combinatorics \\
    Definite integrals & Combinatorics & Geometry \\
    \bottomrule
\end{tabular}
    \caption{\textbf{Top ten nodes in the math stackexchange co-tags hypergraph.}\\
    Top ten nodes in the math stackexchange co-tags hypergraph, according to the ``linear'', ``log-exp'' and ``max'' centrality models, as specified in the top row of the table.}
    \label{tab:topnodes}
\end{table}

In Figure \ref{fig:scatter-edge-node-centrality} we scatter-plot the node and edge centrality obtained with these three models on the Walmart trip and the Math Stack-exchange co-tag hypergraphs. We  normalize the values of each centrality vector by dividing them entry-wise by their largest entry (so that their largest value is scaled to one). The figure compares the three centrality models in a pair-wise fashion and shows no apparent linear correlation between any pair of centralities, confirming  that different choices of the nonlinear functions lead to remarkably different centrality score assignments. 
This is further confirmed by  Figure \ref{fig:correlations}, where we plot the intersection similarity---left panel---as well as the Kendall-$\tau$ and the Spearman correlation coefficients---middle and right panels,
respectively---between the top $k$ nodes ranked by the linear model vs the ranking assigned to the same nodes by the other models, for $k$ which varies between $1$ and $100$. The intersection similarity \cite{fagin2003comparing} is a measure used to compare the top $k$ entries of two ranked lists $\ell^{1}$ and $\ell^{2}$ that 
may not contain the same elements.  It is defined as follows: let $\ell^{j}_k$ be the list of the top $k$ elements in $\ell^{j}$, for 
$j=1,2$.  Then, the  top $k$ intersection similarity between $\ell^{1}$ and $\ell^{2}$ is  
\begin{equation*}
{\rm isim}_k(\ell^{1},\ell^{2}) =1- \frac{1}{k}\sum_{t=1}^k \frac{|\ell^{1}_t\Delta\ell^{2}_t|}{2\, t},
\label{eq:isim}
\end{equation*}
where $|\ell^{1}_t\Delta\ell^{2}_t|$ denotes the number of elements in the  symmetric difference between $\ell^{1}_t$ and $\ell^{2}_t$. 
When the ordered sequences contained in $\ell^{1}$ and $\ell^{2}$ are completely different, then the intersection similarity between the two is minimum and it is equal to 0, whereas,  the intersection similarity between  $\ell^{1}$ and $\ell^{2}$  is equal to $1$ 
if and only if the two ordered sequences coincide.

As already observed for the case of the sunflower hypergraph, the `linear' and `log-exp'  models  may 
assign very different node centrality scores. This effect is also highlighted in  Table~\ref{tab:topnodes} where we report the top ten nodes with highest centrality for the three models for the math stackexchange datasets. For this dataset nodes are the tags that posts receive on the math-stackexchange website.

A similar comparison is illustrated in Figure \ref{fig:scatter-edges} where we scatter-plot the edge centrality of the three models with their edge weights. 
We see that, especially for the `linear' and `max' versions, larger edge weights 
do not correspond to greater importance in this spectral sense.

\section{Conclusion}\label{sec:summ}
Centrality measures give useful information in a range of
network science settings.
In the study of on-line human behaviour, such measures 
 are relevant to 
 targeted advertising \cite{LMGAOH13}, and to the
 detection of fake news generation \cite{PPC20}
  and other negative behaviours such as the spread of viruses and  cyberbullying
  \cite{AB18}.
  They have also proved useful 
in the physical world; for example in 
predicting (or vaccinating against) disease outbreaks
\cite{GMCCF14},
extracting biologically relevant features from 
neural connectivity data
\cite{CH09},
 and 
 quantifying the 
 attack robustness of power networks
\cite{CDM18}.
 
Taking the classical network view, where all nodes and pairwise connections play the same structural role,
typically requires us to trade-off some fine detail for the sake of elegance and simplicity.
By moving to higher order interactions, we are able to re-introduce some of this detail. Of course, 
in doing so we must understand the costs and benefits 
in terms of both the computational expense of the new  algorithms
and 
the ease with which results can be assimilated.
Our aim in this work was to show that 
the widely-used spectral approach to centrality measurement can be extended rigorously and 
at very little cost to the general hypergraph setting in order to quantify the importance of  both nodes and hyperedges.
As shown in Theorem~\ref{thm:existence_uniqueness}, there is a sound  
underlying theory behind the resulting constrained eigenvalue problems.
Further, as shown in Theorem~\ref{thm:convergence}, the measures can be computed by 
an efficient and globally convergent iteration 
(Algorithm~\ref{alg:npm})
that is built on matrix-vector products.

\section{Methods}\label{sec:methods}
This section provides the proofs of our three main theorems. 
\begin{proof}[Proof of Theorem \ref{thm:tensor-eig}]
First note that with $f=\text{id}$ and $g(\b x) =\b x^{1/(p+1)}$ and $N=I$, from \eqref{eq:NEP} we get $\lambda^{p+1} \b x^{p+1} = BW\b y$ and $\mu \b y = \psi(B^\top \varphi(\b x))$ which together imply that $\alpha \b x^{p+1} = BW\psi(B^\top \varphi(\b x))$ for some $\alpha > 0$. Now, as every edge $e$ contains exactly $k$ nodes, we can write $e=\{i_1,\dots,i_k\}$, yielding
$$
\psi(B^\top \varphi(\b x))_e = \exp\Big(\sum_{j\in e}\ln(x_j)\Big) = x_{i_1}\cdots x_{i_k}.
$$
Furthermore, for any $i_1\in V$ and any $\b y \in \RR^m$ we have $(BW\b y)_{i_1} = \sum_{e: i_1\in e}w(e)y_e$. Thus, if $\mathcal A$ is the adjacency tensor of $H$ (defined by $\mathcal A_{i_1,\dots,i_k}=w(e)$ if $e=\{i_1,\dots,i_k\}\in E$ and $\mathcal A_{i_1,\dots,i_k}=0$ otherwise) we get
$$
[BW\psi(B^\top \varphi(\b x))]_{i_1} = \sum_{e: i_1\in e}w(e)\psi(B^\top \varphi(\b x))_e = \sum_{i_2,\dots,i_k}\mathcal A_{i_1,\dots,i_k}x_{i_1}\cdots x_{i_k}.
$$
This shows that if $\b x$ solves \eqref{eq:NEP} then $\b x$ must be such that
\begin{equation*}\label{eq:step1}
    \sum_{i_2,\dots,i_k}\mathcal A_{i_1,i_2,\dots,i_k}x_{i_1}x_{i_2}\cdots x_{i_k} = \alpha  \, x_{i_1}^{\, p+1}  \, .
\end{equation*}
Finally, as $\b x$ is positive we can divide the previous identity by $x_{i_1}$, which reveals that $\b x$ solves the tensor eigenvalue problem in \eqref{eq:tensor_eig}. 
\end{proof}

\begin{proof}[Proof of Theorem \ref{thm:existence_uniqueness}]
The proof follows directly from the Perron--Frobenius theorem for multihomogeneous mappings \cite{gautier2019perron}. Below we highlight the main steps. Let $F:\RR^n\times \RR^m\to \RR^n\times \RR^m$ be the mapping
$$
F(\b x,\b y) = \big(\, g\big(BW f(\b y)\big),  \psi\big(B^\top N \varphi(\b x)\big)\, \big)\, .
$$
 A simple computation shows that $F$ is order-preserving and multihomogeneous with homogeneity matrix 
\begin{equation}\label{eq:M}
 M = \begin{pmatrix}
 0& |\alpha \beta| \\
 |\gamma\delta| & 0
 \end{pmatrix}
\end{equation} 
 %$M$ 
 and a solution of \eqref{eq:NEP} coincides with the multihomogeneous eigenvalue equation $F(\b x,\b y) = (\lambda \b x, \mu \b y)$. As the spectral radius of $M$ coincides with $\sqrt{|\alpha\beta\gamma\delta|}$, the thesis follows directly from \cite[Thm.\ 3.1]{gautier2019perron} under assumption P1. Assume now P2 holds. Since the mappings $f,g,\varphi,\psi$ act entry-wise, are homogeneous and order preserving, we have  that the graph $\mathcal G(F)$, as per \cite[Def.\ 5.1]{gautier2019perron},  coincides with the bipartite graph with adjacency matrix $
\begin{psmallmatrix}
0 & BW\\
B^\top N & 0
\end{psmallmatrix}
$. Thus, by \cite[Thm.\ 5.2]{gautier2019perron} there exist positive solutions to \eqref{eq:NEP}. Now let $(\b u, \b v)>0$ be any such solution and let $DF(\b u, \b v)$ be the Jacobian matrix of $F$ evaluated at $(\b u, \b v)$. Since $f,g,\varphi,\psi$ are positive and homogeneous, the nonzero pattern of $DF(\b u,\b v)$ coincides with the nonzero pattern of the matrix 
$
\begin{psmallmatrix}
0 & BW\\
B^\top N & 0
\end{psmallmatrix}
$.
Therefore, the thesis for assumption P2 eventually follows from \cite[Thm.\ 6.2]{gautier2019perron}.
\end{proof}

\begin{proof}[Proof of Theorem \ref{thm:convergence}]
This proof follows almost directly from the case of tensor eigenvectors, discussed in \cite[Thm.\ 3.3]{gautier2019unifying}. We tailor the main ideas of that argument to our hypergraph eigenvalue problem in \eqref{eq:NEP}.  Consider the mapping $G:\RR^n\times \RR^m\to \RR^n\times \RR^m$, defined by
$$
G(\b x,\b y) = \Big(\, \sqrt{\b x g\big(BW f(\b y)\big)} \, ,  \sqrt{\b y \psi\big(B^\top N \varphi(\b x)\big)}\, \Big),
$$
where all the operations are intended entrywise, and let $M$ be the homogeneity matrix defined in \eqref{eq:M}. It is not difficult to verify that $G$ is order-preserving and multihomogeneous, with homogeneity matrix $\tilde M = \frac 1 2 (M+I)$ and that, if $F$ is defined as in the  proof of Theorem \ref{thm:existence_uniqueness}, then $(\b x,\b y)$ is such that $F(\b x,\b y) =  (\lambda \b x ,\mu  \b y)$, for positive $\lambda$ and $\mu$, if and only if $G(\b x,\b y) = (\tilde \lambda\b x, \tilde \mu  \b y)$, with $\tilde \lambda,\tilde \mu>0$. Moreover, a direct computation shows that  the Jacobian matrices  $DF(\b x,\b y)$ and $DG(\b x,\b y)$ of $F$ and $G$, respectively, are such that 
\begin{equation}\label{eq:DG}
    DG(\b x,\b y) = \frac 1 2 \Diag\big(G(\b x,\b y)\big)^{-1/2}\Big(\Diag\big(F(\b x,\b y)\big)+\Diag\big((\b x,\b y)\big)DF(\b x,\b y)\Big),
\end{equation}
where $\Diag(\b v)$ is the diagonal matrix with diagonal entries given by the elements of $\b v$. As observed in the proof of Theorem \ref{thm:existence_uniqueness},  for a positive vector $(\b x,\b y)$ the matrix $DF(\b x,\b y)$ is irreducible. Thus, from \eqref{eq:DG}, $DG(\b x,\b y)$ is primitive and the thesis eventually follows from  \cite[Thm.\ 7.1]{gautier2019perron}. 
\end{proof}

\section*{Data and code availability}
All data  and code used in this work is publicly available under CCBY 4.0 licence via the online repository \url{https://github.com/ftudisco/node-edge-hypergraph-centrality}

\section*{Funding}   
DJH was supported by EPSRC Programme Grant EP/P020720/1.

% \bibliographystyle{abbrv}
% \bibliography{references}

\begin{thebibliography}{10}

\bibitem{agarwal2006higher}
S.~Agarwal, K.~Branson, and S.~Belongie.
\newblock Higher order learning with graphs.
\newblock In {\em Proceedings of the 23rd International Conference on Machine
  Learning}, pages 17--24, 2006.

\bibitem{AB18}
S.~Ahajjam and H.~Badir.
\newblock Identification of influential spreaders in complex networks using
  {HybridRank} algorithm.
\newblock {\em Sci. Rep.}, 8, 2018.

\bibitem{Amburg-2020-categorical}
I.~Amburg, N.~Veldt, and A.~R. Benson.
\newblock Clustering in graphs and hypergraphs with categorical edge labels.
\newblock In {\em Proceedings of the Web Conference}, 2020.

\bibitem{arrigo2016updating}
F.~Arrigo and M.~Benzi.
\newblock Updating and downdating techniques for optimizing network
  communicability.
\newblock {\em SIAM Journal on Scientific Computing}, 38(1):B25--B49, 2016.

\bibitem{arrigo2020framework}
F.~Arrigo, D.~J. Higham, and F.~Tudisco.
\newblock A framework for second-order eigenvector centralities and clustering
  coefficients.
\newblock {\em Proceedings of the Royal Society A}, 476(2236):20190724, 2020.

\bibitem{arrigo2019multi}
F.~Arrigo and F.~Tudisco.
\newblock Multi-dimensional, multilayer, nonlinear and dynamic {HITS}.
\newblock In {\em Proceedings of the 2019 SIAM International Conference on Data
  Mining}, pages 369--377. SIAM, 2019.

\bibitem{battiston2020beyond}
F.~Battiston, G.~Cencetti, I.~Iacopini, V.~Latora, M.~Lucas, A.~Patania, J.-G.
  Young, and G.~Petri.
\newblock Networks beyond pairwise interactions: Structure and dynamics.
\newblock {\em Physics Reports}, 874:1--92, 2020.

\bibitem{benson2019three}
A.~R. Benson.
\newblock Three hypergraph eigenvector centralities.
\newblock {\em SIAM Journal on Mathematics of Data Science}, 1(2):293--312,
  2019.

\bibitem{Benson-2018-simplicial}
A.~R. Benson, R.~Abebe, M.~T. Schaub, A.~Jadbabaie, and J.~Kleinberg.
\newblock Simplicial closure and higher-order link prediction.
\newblock {\em Proceedings of the National Academy of Sciences}, 2018.

\bibitem{bonacich1987eig}
P.~Bonacich.
\newblock Power and centrality: a family of measures.
\newblock {\em American Journal of Sociology}, 92:1170--1182, 1987.

\bibitem{bretto2013hypergraph}
A.~Bretto.
\newblock {\em Hypergraph Theory: An introduction}.
\newblock Springer, Berlin, 2013.

\bibitem{brohl2019edge}
T.~Br{\"o}hl and K.~Lehnertz.
\newblock Centrality-based identification of important edges in complex
  networks.
\newblock {\em Chaos}, 29(3):033115, 2019.

\bibitem{CDM18}
H.~Cetinay, K.~Devriendt, and P.~V. Mieghem.
\newblock Nodal vulnerability to targeted attacks in power grids.
\newblock {\em Applied Network Science}, 3, 2018.

\bibitem{chan2014make}
H.~Chan, L.~Akoglu, and H.~Tong.
\newblock Make it or break it: Manipulating robustness in large networks.
\newblock In {\em Proceedings of the 2014 SIAM International Conference on Data
  Mining}, pages 325--333. SIAM, 2014.

\bibitem{cipolla2021nonlocal}
S.~Cipolla, F.~Durastante, and F.~Tudisco.
\newblock Nonlocal pagerank.
\newblock {\em ESAIM Mathematical Modelling and Numerical Analysis}, 55:77--97,
  2021.

\bibitem{cipolla2019shifted}
S.~Cipolla, M.~Redivo-Zaglia, and F.~Tudisco.
\newblock Shifted and extrapolated power methods for tensor
  $\ell^p$-eigenpairs.
\newblock {\em ETNA: Electronic Transactions on Numerical Analysis}, 53:1--27,
  2020.

\bibitem{CH09}
J.~J. Crofts and D.~J. Higham.
\newblock A weighted communicability measure applied to complex brain networks.
\newblock {\em Journal of the Royal Society Interface}, 6:411--414.

\bibitem{EstradaKnight2015book}
E.~Estrada and P.~A. Knight.
\newblock {\em A First Course in Network Theory}.
\newblock Oxford University Press, 2015.

\bibitem{estrada2016hypergraph}
E.~Estrada and J.~A. Rodriguez-Vel{\'a}zquez.
\newblock Subgraph centrality and clustering in complex hyper-networks.
\newblock {\em Physica A: Statistical Mechanics and its Applications},
  364:581--594, 2006.

\bibitem{fagin2003comparing}
R.~Fagin, R.~Kumar, and D.~Sivakumar.
\newblock Comparing top k lists.
\newblock {\em SIAM Journal on discrete mathematics}, 17(1):134--160, 2003.

\bibitem{GMCCF14}
M.~Garcia-Herranz, E.~Moro, M.~Cebrian, N.~A. Christakis, and J.~H. Fowler.
\newblock Using friends as sensors to detect global-scale contagious outbreaks.
\newblock {\em PLoS One}, 9, 2014.

\bibitem{gautier2019contractivity}
A.~Gautier and F.~Tudisco.
\newblock The contractivity of cone-preserving multilinear mappings.
\newblock {\em Nonlinearity}, 32:4713, 2019.

\bibitem{gautier2019perron}
A.~Gautier, F.~Tudisco, and M.~Hein.
\newblock The {P}erron-{F}robenius theorem for multihomogeneous mappings.
\newblock {\em SIAM Journal on Matrix Analysis and Applications},
  40(3):1179--1205, 2019.

\bibitem{gautier2019unifying}
A.~Gautier, F.~Tudisco, and M.~Hein.
\newblock A unifying {P}erron--{F}robenius theorem for nonnegative tensors via
  multihomogeneous maps.
\newblock {\em SIAM Journal on Matrix Analysis and Applications},
  40(3):1206--1231, 2019.

\bibitem{gleich2016seeded}
D.~Gleich and K.~Kloster.
\newblock Seeded pagerank solution paths.
\newblock {\em European Journal of Applied Mathematics}, 27(6):812--845, 2016.

\bibitem{kaggle}
{Kaggle's Recruitment Prediction Competition}.
\newblock Walmart recruiting: Trip type classification.
\newblock
  \url{https://www.kaggle.com/c/walmart-recruiting-trip-type-classification}.

\bibitem{kim2011link}
M.~Kim and J.~Leskovec.
\newblock The network completion problem: {I}nferring missing nodes and edges
  in networks.
\newblock In {\em Proceedings of the 2011 SIAM International Conference on Data
  Mining}, 2012.

\bibitem{k99}
J.~M. Kleinberg.
\newblock Authoritative sources in a hyperlinked environment.
\newblock {\em Journal of the ACM (JACM)}, 46(5):604--632, 1999.

\bibitem{LMGAOH13}
P.~Laflin, A.~V. Mantzaris, P.~Grindrod, F.~Ainley, A.~Otley, and D.~J. Higham.
\newblock Discovering and validating influence in a dynamic online social
  network.
\newblock {\em Social Network Analysis and Mining}, 3:1311--1323, 2013.

\bibitem{liben2007link}
D.~Liben-Nowell and J.~Kleinberg.
\newblock The link-prediction problem for social networks.
\newblock {\em Journal of the American society for information science and
  technology}, 58(7):1019--1031, 2007.

\bibitem{Newman2010book}
M.~E.~J. Newman.
\newblock {\em Networks: an Introduction}.
\newblock Oxford University Press, Oxford, 2010.

\bibitem{ouvrad20}
X.~Ouvrad.
\newblock Hypergraphs: an introduction and review.
\newblock Technical report, arXiv:2002.05014, 2010.

\bibitem{PPC20}
F.~Pierri, C.~Piccardi, and S.~Ceri.
\newblock Topology comparison of {T}witter diffusion networks effectively
  reveals misleading information.
\newblock {\em Sci. Rep.}, 10, 2020.

\bibitem{schaf2019landau}
J.~P. Sch{\"a}fermeyer.
\newblock On {E}dmund {L}andau's contribution to the ranking of chess players.
\newblock Technical report, Unpublished manuscript, 2019.

\bibitem{dhillon2012link}
D.~Shin, S.~Si, and I.~S. Dhillon.
\newblock Multi-scale link prediction.
\newblock In {\em Proceedings of the 21st ACM Conference on Information and
  Knowledge}, 2012.

\bibitem{torres2020why}
L.~Torres, A.~S. Blevins, D.~S. Bassett, and T.~Eliassi-Rad.
\newblock The why, how, and when of representations for complex systems.
\newblock Technical report, arXiv:2006.02870v1, 2020.

\bibitem{tudisco2017node}
F.~Tudisco, F.~Arrigo, and A.~Gautier.
\newblock Node and layer eigenvector centralities for multiplex networks.
\newblock {\em SIAM Journal on Applied Mathematics}, 78(2):853--876, 2018.

\bibitem{tudisco2015complex}
F.~Tudisco, V.~Cardinali, and C.~Di~Fiore.
\newblock On complex power nonnegative matrices.
\newblock {\em Linear Algebra and its Applications}, 471:449--468, 2015.

\end{thebibliography}

\end{document}